\documentclass[12pt,english]{iopart}

\usepackage[usenames,dvipsnames]{color}
\usepackage{graphicx}
\expandafter\let\csname equation*\endcsname\relax
\expandafter\let\csname endequation*\endcsname\relax
\usepackage{iopams}
\usepackage{amsmath}
\usepackage{amssymb,amsthm}
\usepackage{amsfonts}
\usepackage[breaklinks]{hyperref}
\usepackage[utf8]{inputenc}
\usepackage{color}
\usepackage{graphics}
\usepackage{bm}
\usepackage{bbold}
\usepackage{cite}

\newcommand{\bra}[1]{\langle #1|}
\newcommand{\ket}[1]{|#1\rangle}
\newcommand{\braket}[2]{\langle #1|#2\rangle}
\newcommand{\dyad}[1]{|#1\rangle\!\langle#1|}

\newcommand{\expval}[2]{\langle #2 | #1 | #2 \rangle}

\newtheorem{Def}{Definition}
\newtheorem{Rmk}{Remark}
\newtheorem{theorem}{Theorem}
\newtheorem{proposition}{Proposition}
\newtheorem{example}{Example}

\begin{document}

\title{Multiple fidelities and joint numerical range}

\author{Pei Li}%\thanks{616784940@qq.com}
\address{School of Computer Science and Technology, Guangdong University of Technology, China}
%\eads{\mailto{2112305013@gdut.edu.cn}}
\author{Bang-Hai Wang}
\address{School of Computer Science and Technology, Guangdong University of Technology, China}
\eads{\mailto{bhwang@gdut.edu.cn}}

\begin{abstract}
We investigate the effectiveness of entanglement detection based on multiple fidelities via the geometry of the joint
separable numerical range. When all reference states are product states, we derive a necessary and sufficient criterion
for such detection: either some pair of reference states has nontrivial moduli
of the local inner products on both subsystems, or the orthogonal complement of the span of the reference states is
completely entangled.
We further show that there exist sets of reference product states for which no proper subset is effective for
entanglement detection, whereas the full set is. A typical example of this phenomenon is provided by
unextendible product bases. Moreover, for a pair of
reference product states on a bipartite system with arbitrary local dimensions, we
characterize both the joint numerical range and the joint separable numerical range, showing that the joint separable numerical range is
determined solely by their local fidelities, as illustrated by a representative two-qubit example. Our results
offer a systematic approach to designing effective entanglement witnesses and lay the groundwork
for extensions to higher-dimensional and multipartite scenarios.
\bigskip

\noindent{Keywords: entanglement witness, multiple fidelities, joint numerical range}
\end{abstract}

%\maketitle
%%%%%%%%%%%%%%%%%%%%%%%%%%%%%%%%%%%%%%%%%%%%%
\section{Introduction}
\label{sec:int}
%%%%%%%%%%%%%%%%%%%%%%%%%%%%%%%%%%%%%%%%%%%%%
Quantum entanglement\cite{horodecki2009} is a defining feature of quantum mechanics that
has no classical counterpart and has long been a central theme in quantum computation and quantum information\cite{nielsen2010quantum}. Owing to the correlations exhibited by entanglement
that go beyond classical limits, entangled states play a crucial role in a variety of
quantum information processing tasks, such as quantum teleportation\cite{bennett1993},
superdense coding\cite{bennett1992}, and quantum key distribution\cite{bennett2014}.

Entanglement detection remains one of the central challenges in quantum information
theory. While several methods for entanglement detection — such as the positive partial transpose (PPT)
criterion\cite{peres1996}, the computable cross-norm (CCNR) or realignment
criterion\cite{chen2003,rudolph2003}, the range criterion\cite{horodecki1997},
and entanglement witnesses (EWs)\cite{chruscinski2014} — have been proposed,
each has inherent limitations. Currently, EWs are
widely used in experiments to detect the entanglement of unknown quantum states\cite{guhne2009entanglement}.
A Hermitian operator $W$ is called an entanglement witness if it
satisfies the following two conditions: \rm (i) For all separable states $\rho_{\text{sep}}$,
 $
 \mathrm{Tr}(W \rho_{\text{sep}}) \ge 0,
 $
\rm (ii) There exists at least one entangled state $\rho_{\text{ent}}$ such that
 $
 \mathrm{Tr}(W \rho_{\text{ent}}) < 0.
 $
In other words, a quantum state is confirmed to be entangled if it has a negative
expectation value on some entanglement witness.
 Many methods have been proposed for constructing EWs\cite{terhal2001family,lewenstein2000optimization,toth2005entanglement,piani2007class}.
A particularly practical class of EWs used in experiments is the family of fidelity-based witnesses, which rely
on the observation that any quantum state whose fidelity with a given entangled
pure state is sufficiently high must itself be entangled\cite{bourennane2004experimental}. Such witnesses can be written as
\begin{equation}
\label{fidelitybased}
 W=\alpha I-\dyad{\psi},
\end{equation}
where $\dyad{\psi}$ denotes a fixed entangled pure state and $\alpha$
is the maximum fidelity between $\dyad{\psi}$ and separable states. Despite
their operational simplicity and experimental accessibility, fidelity-based
entanglement witnesses do not detect all entangled states.
Those entangled states that cannot be detected by any fidelity-based witness are
known as unfaithful entangled states\cite{weilenmann2020,guhne2021}.

% \begin{equation}
% F(\rho)=\max_{U_{A}, U_{B}}\mathrm{tr}(\rho U_{A}\otimes U_{B}\Phi_{1}U_{A}^{\dagger}\otimes U_{B}^{\dagger}),
% \end{equation}

To address the limitations of fidelity-based witnesses based on a single reference state, Zhang et al. \cite{zhang2025}
 proposed detecting entanglement by measuring multiple fidelities with respect to several
reference states, defined by
\begin{equation}
 \mathrm{F}(\rho, (\dyad{\psi_1}, \ldots, \dyad{\psi_k}))=(\expval{\rho}{\psi_1}
, \ldots, \expval{\rho}{\psi_k}),
\label{k-tuple}
\end{equation}
where $\rho$ denotes the target quantum state and the $\dyad{\psi_i}$ are reference
pure states, which can be either entangled or product states. If the resulting tuple does
not lie within the set of all possible values attainable by separable states, then
$\rho$ is certified as entangled.
Compared with fidelity-based witnesses built from a single reference state, this
approach has greater detection power and is not tied to a single fixed entangled
reference state. A key issue is therefore how to choose the reference states
$\dyad{\psi_i}$ such that the entanglement detection method based on Eq.~\eqref{k-tuple}
is effective. This can be naturally addressed using a geometric perspective based on the joint numerical range (JNR)
and the joint separable numerical range (JSNR) \cite{gutkin2013}. The resulting geometric
framework generalizes conventional fidelity-based witnesses and provides a systematic way
to analyze which choices of $\dyad{\psi_i}$ lead to effective entanglement detection. By
using the supporting-hyperplane description of the JSNR, one
can identify choices of reference states whose associated witness operators detect
entanglement, including certain PPT entangled states\cite{bennett1999}. This
geometric viewpoint leads to a unified framework for entanglement detection based on
multiple fidelities and JSNR geometry.

In \cite{wu2020}, Wu et al. established a sufficient condition for two observables to be effective in entanglement detection.
In this work, we demonstrate that when the observables are rank-one projectors onto
product states, this condition is both necessary and sufficient. Furthermore,
we derive the general necessary and sufficient conditions for a set of reference product states to be effective for entanglement detection.
In particular, we
demonstrate that for a pair of reference product states on a bipartite system with arbitrary local dimensions,
the geometry of the JSNR is governed solely by their local fidelities, which allows us to construct the corresponding
JSNR from local data in a representative two-qubit example.

The remainder of this paper is structured as follows. Section \ref{sec:pre} reviews
the mathematical preliminaries on the JNR and the JSNR, together with the
support-function formulation and the relevant notions of CES and CSS. Section \ref{sec:res} presents our main
theoretical results, providing the necessary and sufficient conditions for when
two reference product states, and more generally $k$ reference product states, can detect entanglement, followed by a discussion on their local
unitary invariance and analytical examples. Finally, Section \ref{sec:con} summarizes
our work and discusses potential extensions to multipartite and higher-dimensional
scenarios.

\section{Preliminaries}
\label{sec:pre}
In this section, we introduce several notions and mathematical tools that will be used
throughout the paper.
 Throughout the paper, for two pure states \(|\phi\rangle\) and \(|\psi\rangle\), we
write \( |\langle\phi|\psi\rangle|^2 \) for their fidelity.
For two product states \(|a_1\rangle\otimes|b_1\rangle\) and \(|a_2\rangle\otimes|b_2\rangle\),
we write \(c_A:=|\langle a_1|a_2\rangle|^2\) and \(c_B:=|\langle b_1|b_2\rangle|^2\)
for the corresponding local fidelities.
 \subsection{Joint numerical range}
Consider $k$ Hermitian operators $A_{1}, \ldots, A_{k}$ acting on a Hilbert space of
dimension $d_{A} d_{B}$. For a set \(\mathcal X\) of quantum states and a tuple
\(A=(A_1,\ldots,A_k)\) of Hermitian operators, define the restricted joint numerical
range by\cite{simnacher2021confident}
\begin{equation}
 \label{eq:JSNR}
 L_{\mathcal X}(A_1, \ldots, A_k)=\{(\mathrm{Tr}(\rho A_1), \ldots, \mathrm{Tr}(\rho A_k))
 \in\mathbb{R}^k\mid\rho\in \mathcal X\}.
\end{equation}
When \(\mathcal X\) is the set of all quantum states, \(L_{\mathcal X}(A_1,\ldots,A_k)\) is the joint numerical
range, and we simply write $L(A_1,\ldots,A_k)$. When \(\mathcal X\) is the set of separable
states, we obtain the joint separable numerical range, denoted by
$L_{\mathrm{SEP}}(A_1,\ldots,A_k)$, which characterizes all $k$-tuple expectations
 attainable by separable states.

As long as \(\mathcal X\) is a compact convex set, \(L_{\mathcal X}(A_1,\ldots,A_k)\) is also compact and convex,
because it is the image of \(\mathcal X\) under the linear map
\(\rho \mapsto (\Tr(\rho A_1),\ldots,\Tr(\rho A_k))\). If we let $A_i = \dyad{\psi_i}$,
a quantum state $\rho$ is certified as entangled whenever the tuple in
 Eq.~\eqref{k-tuple} lies outside $L_{\mathrm{SEP}}(A_1,\ldots,A_k)$.
For the tuple of operators $A = (A_1, \dots, A_k)$, we abbreviate
$L_{\mathrm{SEP}}(A):=L_{\mathrm{SEP}}(A_1,\ldots,A_k)$. Its support function
is defined as\cite{bertsekas2003convex}
\begin{equation}
h_{L_{\mathrm{SEP}}(A)} (\mathbf{n}):= \max_{\rho_{\mathrm{sep}}} \sum_{i=1}^{k} n_i \, \mathrm{Tr}(\rho_{\mathrm{sep}} A_i),
\label{eq:support_function}
\end{equation}
where $\mathbf{n} = (n_1, \dots, n_k) \in \mathbb{R}^k$ denotes the direction of support,
and $\rho_{\mathrm{sep}}$ ranges over all separable states.
Since a compact convex set is uniquely determined by its support function,
 $L_{\mathrm{SEP}}(A)$ can be expressed as
\begin{equation}
L_{\mathrm{SEP}} (A) = \bigcap_{\mathbf{n}\in \mathbb{R}^k}
\Bigl\{ \mathbf{x} \in \mathbb{R}^k \ \big| \ \mathbf{n}\cdot \mathbf{x} \le h_{L_{\mathrm{SEP}}(A)} (\mathbf{n}) \Bigr\},
\label{eq:L_SEP_support}
\end{equation}
 The corresponding detection region in the space of expectation tuples is
 \begin{equation}
 \mathcal{D}_A:= \bigcup_{\mathbf{n} \in \mathbb{R}^k}
\Bigl\{ \mathbf{x} \in L(A_1,\ldots,A_k) \ \big| \ \mathbf{n}\cdot \mathbf{x} > h_{L_{\mathrm{SEP}}(A)} (\mathbf{n}) \Bigr\},
\label{eq:DA_support_union}
\end{equation}
 A state $\rho$ is detected if and only if its expectation tuple lies in $\mathcal{D}_A$.
Equivalently, the set of detected states can be expressed as
\begin{equation}
D_A:= \bigcup_{\mathbf{n} \in \mathbb{R}^k}
\Bigl\{ \rho \ \big| \ \mathrm{Tr}\Big(\rho \, W_{\mathbf{n}}\Big) < 0 \Bigr\},
\quad
W_{\mathbf{n}}:= h_{L_{\mathrm{SEP}}(A)} (\mathbf{n})I - \sum_{i=1}^{k} n_i A_i.
\label{eq:DA_EW_negative}
\end{equation}
It is straightforward to see that a set of observables $\{A_i\}$ can be used to detect entanglement
if and only if there exists a direction $\mathbf{n}$ such that the corresponding witness
$W_{\mathbf{n}}$ in Eq.~\eqref{eq:DA_EW_negative} is an effective entanglement witness.
Geometrically, Eq.~\eqref{eq:DA_EW_negative} detects entanglement by finding a supporting
hyperplane of the JSNR with outward normal vector $\mathbf{n}$.
To analyze when such witnesses are effective, we next recall the notions of CES and CSS.

 \subsection{Completely Entangled Subspace and Completely Separable Subspace}
 A subspace $\mathcal{S} \subset \mathcal{H}_A \otimes \mathcal{H}_B$ is called a completely entangled subspace (CES) \cite{parthasarathy2004} if it contains no product state, i.e.,
\[
|\psi_A\rangle \otimes |\psi_B\rangle \notin \mathcal{S}, \quad \forall\, |\psi_A\rangle \in \mathcal{H}_A, \, |\psi_B\rangle \in \mathcal{H}_B.
\]
In other words, every vector in $\mathcal{S}$ is entangled. If a mixed state is supported entirely on a CES, it must be entangled.
The maximal possible dimension of a CES in a $d_A \times d_B$ system is\cite{parthasarathy2004}
\begin{equation}
 \label{eq:CESmax}
 \dim(\mathcal{S}_{\max}) = (d_A - 1)(d_B - 1).
\end{equation}

 This notion is closely related to unextendible product bases (UPBs)\cite{bennett1999}:
the orthogonal complement of a UPB is a CES.
Similarly, we can define the concept of a completely separable subspace.
\begin{Def}
A subspace $\mathcal{T} \subset \mathcal{H}_A \otimes \mathcal{H}_B$ is called a
completely separable subspace (CSS),
if every vector in it is a product state, i.e.,
\[
|\phi\rangle \in \mathcal{T} \Rightarrow |\phi\rangle = |\psi_A\rangle \otimes |\psi_B\rangle.
\]
\end{Def}
A CSS is therefore a subspace consisting entirely of product vectors.
As will be proved in Proposition \ref{pro:two product}, in the bipartite setting, if a subspace contained two product vectors with
nonproportional local factors on both subsystems, then their span would already
contain entangled vectors. Hence, in a CSS, all vectors must share a common local
factor on at least one subsystem; that is, either every vector has the same
\(\mathcal{H}_A\)-side factor or every vector has the same \(\mathcal{H}_B\)-side factor.
Therefore a CSS has the form
$|a\rangle \otimes \mathcal{T}_B$ or $\mathcal{T}_A \otimes |b\rangle$,
for some subspaces \(\mathcal{T}_A \subseteq \mathcal{H}_A\) and
\(\mathcal{T}_B \subseteq \mathcal{H}_B\).
Consequently, the maximal possible dimension of a CSS is
\begin{equation}
\dim(\mathcal{T}_{\max}) = \max(d_A, d_B).
\end{equation}

\section{Main results}
\label{sec:res}
As discussed in the Introduction, choosing effective reference states for multiple fidelities is not straightforward.
Clearly, if one of the reference states is entangled, it can certainly be used to detect
entanglement, because measuring the fidelity between the target state and an entangled pure
state is sufficient to detect certain entangled states.

 The key question is under what conditions a set of reference product states
$\{\ket{\psi_i}\}_{i=1}^k$
can be effective for entanglement detection.
We note that for each fixed direction vector $\mathbf{n}$ of the support function,
Eq.~\eqref{eq:DA_EW_negative} has the form
\begin{equation}
\label{eq:single}
W = \alpha I - A, \qquad
\text{with } \alpha = \max_{\sigma \in \mathrm{SEP}} \Tr(\sigma A).
\end{equation}
The effectiveness of Eq.~\eqref{eq:single} depends entirely on $A$. For a single
observable \(A\), we have the following conclusion.

\begin{proposition}
 \label{pro:obs}
 An observable \(A\) yields an effective entanglement witness of the form
 Eq.~\eqref{eq:single} if and only if the eigenspace corresponding to the maximal eigenvalue of \(A\) is a
 CES.
\end{proposition}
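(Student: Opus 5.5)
The plan is to work with $W=\alpha I-A$, where $\alpha=\max_{\sigma\in\mathrm{SEP}}\Tr(\sigma A)$, and $\lambda_{\max}$ is the largest eigenvalue of $A$ with eigenspace $E_{\max}$. By construction $\Tr(W\sigma)\ge 0$ for every separable $\sigma$, so condition (i) of an entanglement witness always holds. Effectiveness of $W$ is therefore equivalent to the existence of some state $\rho$ with $\Tr(\rho A)>\alpha$. Since $\max_\rho \Tr(\rho A)=\lambda_{\max}$ over all states, $W$ is effective iff $\alpha<\lambda_{\max}$. The whole proof reduces to showing that $\alpha<\lambda_{\max}$ holds iff $E_{\max}$ contains no product vector.

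First reduce $\alpha$ to product vectors. The separable states form the convex hull of pure product states, and $\sigma\mapsto\Tr(\sigma A)$ is linear. Hence
\[
\alpha=\max_{\ket{a}\ket{b}}\expval{A}{a\otimes b},
\]
with the maximum over normalized $\ket a,\ket b$. It is attained because the set of normalized product vectors is compact (a product of two unit spheres) and the function is continuous.

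($\Leftarrow$) Suppose $E_{\max}$ is a CES. Any unit vector $\ket v$ satisfies $\expval{A}{v}\le\lambda_{\max}$, with equality iff $\ket v\in E_{\max}$; this follows from the spectral decomposition. Now take a maximizing product vector $\ket{a^*}\ket{b^*}$. If $\alpha=\lambda_{\max}$, this vector would lie in $E_{\max}$, contradicting the CES assumption. Hence $\alpha<\lambda_{\max}$. Any state supported on $E_{\max}$, for example a pure eigenvector, then gives $\Tr(W\rho)=\alpha-\lambda_{\max}<0$, so $W$ is effective.

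($\Rightarrow$) Argue by contraposition. If $E_{\max}$ contains a product vector $\ket a\ket b$, then $\alpha\ge\expval{A}{a\otimes b}=\lambda_{\max}$. Therefore $\Tr(W\rho)=\alpha-\Tr(\rho A)\ge\alpha-\lambda_{\max}\ge0$ for every state $\rho$, so $W$ detects nothing.

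The only step needing care is the attainment of the maximum over product vectors, which the compactness argument above settles; apart from that the argument is routine. I would also remark that $W\ge0$ in the non-CES case, so it is not a witness at all, consistent with condition (ii) failing.
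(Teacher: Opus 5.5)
Your proof is correct and follows essentially the same route as the paper: both show that $\alpha=\lambda_{\max}(A)$ holds exactly when $E_{\max}(A)$ contains a product vector, using attainment of the maximum over a compact set for sufficiency and a product eigenvector for necessity. The one small difference is that you reduce $\alpha$ to a maximum over pure product vectors, which makes explicit a step the paper leaves implicit, namely that a separable optimizer supported on $E_{\max}(A)$ forces a product vector to lie in $E_{\max}(A)$.
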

\begin{proof}
 Let \(E_{\max}(A)\) denote the eigenspace corresponding to the maximal eigenvalue
 \(\lambda_{\max}(A)\) of \(A\).

 For sufficiency, assume that \(E_{\max}(A)\) is a CES. Since the set of separable
 states is compact and the map \(\sigma \mapsto \Tr(\sigma A)\) is continuous, the value
 \(\alpha=\max_{\sigma\in\mathrm{SEP}}\Tr(\sigma A)\) is attained by some separable state.
 If \(\alpha=\lambda_{\max}(A)\), then an optimizing separable state \(\sigma_\ast\) would satisfy
 \(\Tr(\sigma_\ast A)=\lambda_{\max}(A)\), which is possible only if \(\sigma_\ast\) is supported on
 \(E_{\max}(A)\). This contradicts the fact that \(E_{\max}(A)\) is a CES. Hence
 \[
 \alpha<\lambda_{\max}(A).
 \]
 By the discussion in Section \ref{sec:pre}, any state supported on \(E_{\max}(A)\)
 is entangled. Therefore, for any state \(\rho\) supported on \(E_{\max}(A)\), we have
 \(\Tr(A\rho)=\lambda_{\max}(A)\), and thus
 \[
 \Tr(W\rho)=\alpha-\lambda_{\max}(A)<0.
 \]
 On the other hand, by the definition of \(\alpha\), \(\Tr(W\sigma)\ge 0\) for all
 separable states \(\sigma\). Therefore \(W\) is an effective entanglement witness.

 For necessity, assume that \(E_{\max}(A)\) contains a separable state \(\sigma_{\max}\).
 Then
 \[
 \Tr(A\sigma_{\max})=\lambda_{\max}(A),
 \]
 so \(\alpha=\lambda_{\max}(A)\). Therefore, for any state \(\rho\),
 \[
 \Tr(W\rho)=\alpha-\Tr(A\rho)\ge \lambda_{\max}(A)-\lambda_{\max}(A)=0.
 \]
 Hence \(W\) has no negative expectation value on any state, so Eq.~\eqref{eq:single}
 does not define an entanglement witness.
\end{proof}
Equivalently, Eq.~\eqref{eq:single} is effective if and only if
\(\alpha<\lambda_{\max}(A)\), which holds precisely when \(E_{\max}(A)\) is a CES.

As discussed in Section \ref{sec:pre}, a set of \(k\) observables detects entangled
states via the JSNR if and only if at least one witness in the family
 \eqref{eq:DA_EW_negative} is effective. Since directly analyzing the eigenvectors of the space spanned
by \(k\) product-state projectors is very challenging, we first consider the case of
two product states,
\begin{proposition}
 \label{pro:two product}
For any two product states \(|\psi_1\rangle = |a_1\rangle \otimes |b_1\rangle\) and \(|\psi_2\rangle = |a_2\rangle \otimes |b_2\rangle\), if they do not share a common local factor on either subsystem (i.e., \(|a_1\rangle \not\propto |a_2\rangle\) and \(|b_1\rangle \not\propto |b_2\rangle\)), then any nontrivial linear combination
\[
|\phi\rangle = \alpha |\psi_1\rangle + \beta |\psi_2\rangle, \quad \alpha, \beta \neq 0,
\]
is entangled.
\end{proposition}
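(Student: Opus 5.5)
The plan is to use the operator--Schmidt-rank characterization of product vectors. We identify a vector $|\phi\rangle\in\mathcal H_A\otimes\mathcal H_B$ with the $d_A\times d_B$ coefficient matrix $M_\phi$, defined by $|\phi\rangle=\sum_{i,j}(M_\phi)_{ij}|i\rangle|j\rangle$. A nonzero vector is a product state if and only if $\operatorname{rank}M_\phi=1$, since $|a\rangle\otimes|b\rangle$ corresponds to $|a\rangle\langle b^*|$. For the given combination
\[
M_\phi=\alpha\,|a_1\rangle\langle b_1^*|+\beta\,|a_2\rangle\langle b_2^*|,
\]
it therefore suffices to show that this matrix has rank exactly $2$.

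The key step is to exploit linear independence on both sides. Since $|a_1\rangle\not\propto|a_2\rangle$, the vectors $|a_1\rangle,|a_2\rangle$ are linearly independent. Likewise $|b_1\rangle,|b_2\rangle$ are linearly independent, and so are their complex conjugates. Write $M_\phi=XDY^{T}$ with $X=[a_1\ a_2]$ of size $d_A\times 2$, $D=\mathrm{diag}(\alpha,\beta)$, and $Y=[b_1^*\ b_2^*]$. Here $X$ has full column rank $2$, $Y^{T}$ has full row rank $2$, and $D$ is invertible because $\alpha,\beta\neq0$. Hence $\operatorname{rank}M_\phi=2$.

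A direct argument would serve equally well. Choose $\langle b'|$ with $\langle b'|b_1\rangle\neq0$ and $\langle b'|b_2\rangle=0$; such a functional exists by linear independence. Applying $I\otimes\langle b'|$ to $|\phi\rangle$ gives a nonzero multiple of $|a_1\rangle$. Doing the same with the roles of $b_1$ and $b_2$ swapped gives a nonzero multiple of $|a_2\rangle$. Thus the reduced range on $A$ is two-dimensional, which is impossible for a product vector, whose partial contractions all lie along a single $|a\rangle$.

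The only delicate point is to ensure that $|\phi\rangle\neq0$ and that nonproportionality is correctly converted into linear independence. Both vectors are nonzero states, so nonproportional means independent. Moreover, $|\phi\rangle\neq0$ follows from the rank being $2$. I would present the partial-contraction version, since it avoids choosing a basis and makes the roles of both hypotheses transparent.
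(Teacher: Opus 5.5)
Your proof is correct and is essentially the paper's argument: both reduce the claim to showing that the coefficient matrix of $|\phi\rangle$ has rank $2$. The paper does this by choosing adapted bases of $\mathrm{span}\{|a_1\rangle,|a_2\rangle\}$ and $\mathrm{span}\{|b_1\rangle,|b_2\rangle\}$ and computing the $2\times 2$ determinant $\alpha\beta t_At_B\neq0$, whereas you use the basis-free factorization $XDY^{T}$ with full-rank outer factors (or, equivalently, the partial contractions), with one cosmetic slip: under your convention $\langle b^*|=b^{T}$ the factor should be $Y=[\,b_1\ b_2\,]$ (or else keep your $Y$ and write $Y^{\dagger}$), which of course does not change the rank.
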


\begin{proof}
Suppose, for contradiction, that \(|\phi\rangle\) is separable.
 Choose bases \(\{|e_A\rangle,|f_A\rangle\}\) and \(\{|e_B\rangle,|f_B\rangle\}\) for the two-dimensional subspaces
\(\mathrm{span}\{|a_1\rangle,|a_2\rangle\}\) and \(\mathrm{span}\{|b_1\rangle,|b_2\rangle\}\), respectively, such that
 \[
 |a_1\rangle = | e_A \rangle, \qquad |a_2\rangle = s_A | e_A \rangle + t_A | f_A \rangle,
 \]
 \[
|b_1\rangle = |e_B\rangle, \qquad |b_2\rangle = s_B |e_B\rangle + t_B |f_B\rangle,
\]
where \(t_A \neq 0\) and \(t_B \neq 0\), since otherwise \(|a_2\rangle\) would be proportional to \(|a_1\rangle\) or \(|b_2\rangle\) would be proportional to \(|b_1\rangle\), contradicting the assumptions. Then
\[
\begin{aligned}
|\phi\rangle
&=
\alpha |e_A e_B\rangle
+ \beta (s_A |e_A\rangle + t_A |f_A\rangle)(s_B |e_B\rangle + t_B |f_B\rangle) \\
&=
(\alpha + \beta s_A s_B)|e_A e_B\rangle
+ \beta s_A t_B |e_A f_B\rangle
+ \beta t_A s_B |f_A e_B\rangle
+ \beta t_A t_B |f_A f_B\rangle.
\end{aligned}
\]
If \(|\phi\rangle\) were a product state, then its coefficient matrix in this \(2\times 2\) basis
would have rank one, so its determinant would vanish. However, the corresponding matrix is
\[
\begin{pmatrix}
\alpha + \beta s_A s_B & \beta s_A t_B \\
\beta t_A s_B & \beta t_A t_B
\end{pmatrix},
\]
whose determinant equals
\[
(\alpha + \beta s_A s_B)(\beta t_A t_B) - (\beta s_A t_B)(\beta t_A s_B)
= \alpha \beta t_A t_B \neq 0.
\]
This contradiction shows that \(|\phi\rangle\) cannot be separable.
Therefore, \(|\phi\rangle\) must be entangled.
\end{proof}
\begin{Rmk}
 In algebraic geometry, the subspace spanned by any two product states
 intersects the Segre variety in at most two points, unless they share
 a common local factor on one subsystem \cite{bengtsson2017}.
\end{Rmk}

%We now know the necessary and sufficient condition for a single observable to detect entangled states.
By combining Eq.~\eqref{eq:DA_EW_negative} with Proposition \ref{pro:obs}, we
 can directly obtain the conditions under which measurements of multiple fidelities
 can effectively detect entanglement.

\begin{theorem}
 \label{th1}
Two linearly independent product states $\ket{\psi_1} = \ket{\psi_{A_1}}\otimes\ket{\psi_{B_1}}$ and
$\ket{\psi_2}=\ket{\psi_{A_2}}\otimes\ket{\psi_{B_2}}$ can be used to detect entangled
 states via multiple fidelities if and only if the moduli of their local inner products satisfy
 \[
 0 < |\braket{\psi_{A_1}}{\psi_{A_2}}| < 1
 \quad \text{and} \quad
 0 < |\braket{\psi_{B_1}}{\psi_{B_2}}| < 1.
 \]
 \end{theorem}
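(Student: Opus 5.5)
The plan is to combine Eq.~\eqref{eq:DA_EW_negative} with Proposition~\ref{pro:obs}. The pair $\{\dyad{\psi_1},\dyad{\psi_2}\}$ detects entanglement if and only if, for some direction $\mathbf{n}=(n_1,n_2)$, the top eigenspace $E_{\max}(A_{\mathbf n})$ of
\[
A_{\mathbf n}=n_1\dyad{\psi_1}+n_2\dyad{\psi_2}
\]
is a CES. Write $V=\mathrm{span}\{\ket{\psi_1},\ket{\psi_2}\}$, which is two-dimensional by linear independence. The operator $A_{\mathbf n}$ vanishes on $V^\perp$ and leaves $V$ invariant, so its spectrum consists of $0$ (on $V^\perp$) together with the two eigenvalues of its $2\times 2$ restriction to $V$.

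The first step is a fact used repeatedly: $V^\perp$ always contains a product vector. Take $\ket{a}\perp\ket{\psi_{A_1}}$ and $\ket{b}\perp\ket{\psi_{B_2}}$. Then $\ket{a}\otimes\ket{b}$ is orthogonal to both $\ket{\psi_1}$ and $\ket{\psi_2}$. Consequently, whenever $\lambda_{\max}(A_{\mathbf n})\le 0$, the eigenspace $E_{\max}(A_{\mathbf n})$ contains a product vector and gives no effective witness. So only directions with $\lambda_{\max}>0$ matter, and for these $E_{\max}(A_{\mathbf n})\subseteq V$.

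For sufficiency, assume $0<|\braket{\psi_{A_1}}{\psi_{A_2}}|<1$ and $0<|\braket{\psi_{B_1}}{\psi_{B_2}}|<1$, and take, say, $\mathbf n=(1,1)$ (any direction with $n_1>0$ should work). Then $\braket{\psi_1}{\psi_2}\neq 0$, so $A_{\mathbf n}\ket{\psi_1}$ has a nonzero component along $\ket{\psi_2}$. Hence neither $\ket{\psi_1}$ nor $\ket{\psi_2}$ is an eigenvector of $A_{\mathbf n}$. The restriction of $A_{\mathbf n}$ to $V$ is not a multiple of the identity, because it does not fix $\ket{\psi_1}$ up to scale. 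Therefore the positive top eigenvalue is nondegenerate, and its eigenvector has the form $\alpha\ket{\psi_1}+\beta\ket{\psi_2}$ with $\alpha,\beta\neq0$. By Proposition~\ref{pro:two product}, using that both local moduli are $<1$, this eigenvector is entangled. So $E_{\max}$ is a one-dimensional CES, and Proposition~\ref{pro:obs} gives an effective witness.

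For necessity, suppose the condition fails; there are two cases.

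\emph{Case 1: some local modulus is $0$.} Then $\ket{\psi_1}\perp\ket{\psi_2}$, so $A_{\mathbf n}$ is diagonal in $\{\ket{\psi_1},\ket{\psi_2}\}\cup V^\perp$. If $\lambda_{\max}>0$, then $E_{\max}$ is spanned by $\ket{\psi_1}$, by $\ket{\psi_2}$, or (when $n_1=n_2$) by both, and in each case it contains a product vector. If $\lambda_{\max}\le 0$, the product vector in $V^\perp$ constructed above lies in $E_{\max}$.

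\emph{Case 2: some local modulus is $1$,} say $\ket{\psi_{A_1}}\propto\ket{\psi_{A_2}}$. Then $V=\ket{\psi_{A_1}}\otimes\mathrm{span}\{\ket{\psi_{B_1}},\ket{\psi_{B_2}}\}$ is a CSS. Any eigenspace inside $V$ consists of product vectors, and eigenspaces for $\lambda_{\max}\le0$ are handled as before.

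In both cases no direction $\mathbf n$ gives a CES, so by Proposition~\ref{pro:obs} no witness $W_{\mathbf n}$ is effective.

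The main point needing care is the nondegeneracy claim in the sufficiency argument, namely that the top eigenvector is a genuine superposition with both coefficients nonzero. The argument above handles it by noting that $\ket{\psi_1}$ cannot be an eigenvector when $\braket{\psi_1}{\psi_2}\neq0$ and $n_2\neq0$. The $\lambda_{\max}\le0$ directions are disposed of uniformly by the explicit product vector in $V^\perp$.
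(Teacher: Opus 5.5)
Your proof is correct and follows essentially the paper's route: reduction via Proposition~\ref{pro:obs}, then $\mathbf n=(1,1)$ with Proposition~\ref{pro:two product} for sufficiency, then the orthogonal/CSS case split for necessity. The only differences are that you dispose of the zero-eigenvalue case with the explicit product vector $\ket{a}\otimes\ket{b}\in V^\perp$ instead of the paper's CES dimension bound $(d_A-1)(d_B-1)$, and you argue nondegeneracy qualitatively instead of exhibiting the eigenvectors $\ket{\psi_1}\pm\eta\ket{\psi_2}$ with $\eta=\overline{c}/|c|$; your parenthetical that any $n_1>0$ works fails at $n_2=0$ (the top eigenvector is then $\ket{\psi_1}$), but your proof only uses $\mathbf n=(1,1)$.
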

\begin{proof}
 For sufficiency, let
 \[
 c:=\braket{\psi_1}{\psi_2}.
 \]
 Since
 \[
 c=\braket{\psi_{A_1}}{\psi_{A_2}}\,\braket{\psi_{B_1}}{\psi_{B_2}}
 \]
 and
 \[
 0 < |\braket{\psi_{A_1}}{\psi_{A_2}}| < 1,
 \qquad
 0 < |\braket{\psi_{B_1}}{\psi_{B_2}}| < 1,
 \]
 we have \(0<|c|<1\). Consider the observable
 \[
 M=\dyad{\psi_1}+\dyad{\psi_2}.
 \]
 Let \(\eta:=\overline{c}/|c|\). Then a direct calculation shows that
 \[
 M(\ket{\psi_1}\pm \eta \ket{\psi_2})=(1\pm |c|)(\ket{\psi_1}\pm \eta \ket{\psi_2}).
 \]
 Hence the maximal eigenvalue of \(M\) is \(1+|c|\), and its eigenspace is one-dimensional,
 spanned by
 \[
 \ket{\phi_{\max}}=\ket{\psi_1}+\eta \ket{\psi_2}.
 \]
 Since both coefficients in \(\ket{\phi_{\max}}\) are nonzero, Proposition \ref{pro:two product}
 implies that \(\ket{\phi_{\max}}\) is entangled. Therefore the eigenspace corresponding
 to the maximal eigenvalue of \(M\) is a CES. By Proposition \ref{pro:obs},
 \(\ket{\psi_1}\) and \(\ket{\psi_2}\) can be used to detect entanglement.

 For necessity, let
 \[
 M=n_1\dyad{\psi_1}+n_2\dyad{\psi_2}
 \]
 be an arbitrary Hermitian linear combination.
 First assume that \(\ket{\psi_1}\) and \(\ket{\psi_2}\) are orthogonal. Then
 \(\ket{\psi_1}\) and \(\ket{\psi_2}\) are eigenvectors of \(M\) with eigenvalues
 \(n_1\) and \(n_2\), respectively, and
 \[
 \lambda_{\max}(M)=\max\{n_1,n_2,0\}.
 \]
 If \(\lambda_{\max}(M)>0\), then the maximal eigenspace contains either
 \(\ket{\psi_1}\), \(\ket{\psi_2}\), or both, so it is not a CES.
 If \(\lambda_{\max}(M)=0\), then \(\ker M\) has codimension at most \(2\), hence
 \[
 \dim \ker M \ge d_A d_B -2 > (d_A-1)(d_B-1),
 \]
 where the strict inequality holds for all \(d_A,d_B\ge 2\). By Eq.~\eqref{eq:CESmax},
 \(\ker M\) cannot be a CES. Therefore, for orthogonal \(\ket{\psi_1}\) and
 \(\ket{\psi_2}\), the maximal eigenspace of \(M\) is never a CES.

 Now assume that \(\ket{\psi_{A_1}} \propto \ket{\psi_{A_2}}\) or
 \(\ket{\psi_{B_1}} \propto \ket{\psi_{B_2}}\). Then
 \(\mathrm{span}\{\ket{\psi_1},\ket{\psi_2}\}\) is a CSS. Every nonzero eigenspace of \(M\)
 is contained in this span, and hence contains only product states. If instead
 \(\lambda_{\max}(M)=0\), then again \(\ker M\) has codimension at most \(2\), so it
 cannot be a CES by the same dimension argument above. Thus the maximal eigenspace of
 \(M\) is never a CES in this case either.

 In all cases where the local conditions in the theorem fail, no Hermitian linear
 combination \(M=n_1\dyad{\psi_1}+n_2\dyad{\psi_2}\) has a maximal eigenspace that is a
 CES. By Proposition \ref{pro:obs}, \(\ket{\psi_1}\) and \(\ket{\psi_2}\) cannot be
 used to detect entanglement.
 \end{proof}
\begin{Rmk}
 The sufficiency of Theorem \ref{th1} can also be derived naturally from \cite{wu2020}, which states that two product observables
\( A_1 \otimes B_1 \) and \( A_2 \otimes B_2 \) are effective for entanglement detection provided that
\( A_1 \) and \( A_2 \) share no common eigenvectors, and neither do \( B_1 \) and \( B_2 \).

\end{Rmk}

Theorem \ref{th1} can be extended to the case of $k > 2$, but there are subtle differences.
For $k > 2$, it is not always the case that the eigenspace corresponding to the zero
eigenvalue contains separable states. With slight modifications, we can obtain the following result.

\begin{theorem}
 \label{th2}
 $k$ linearly independent product states $\ket{\psi_1} = \ket{\psi_{A_1}}\otimes\ket{\psi_{B_1}}$,
 $\ket{\psi_2}=\ket{\psi_{A_2}}\otimes\ket{\psi_{B_2}}, \ldots,
 \ket{\psi_k}=\ket{\psi_{A_k}}\otimes\ket{\psi_{B_k}}$ can be used to detect entangled
 states via multiple fidelities if and only if either there exist \(i \neq j\) such that the
 moduli of their local inner products satisfy
 \[
 0 < |\braket{\psi_{A_i}}{\psi_{A_j}}| < 1
 \quad \text{and} \quad
 0 < |\braket{\psi_{B_i}}{\psi_{B_j}}| < 1,
 \]
 or the orthogonal complement of $\mathrm{span}\{\ket{\psi_i}, i\in[k]\}$
 forms a CES.
\end{theorem}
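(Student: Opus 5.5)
The proof splits into sufficiency and necessity. Throughout I use the reduction of Section~\ref{sec:pre}: the $k$ reference states detect entanglement if and only if, for some direction $\mathbf n\in\mathbb R^k$, the observable $M_{\mathbf n}=\sum_i n_i\dyad{\psi_i}$ yields an effective witness. By Proposition~\ref{pro:obs}, this holds exactly when the maximal eigenspace $E_{\max}(M_{\mathbf n})$ is a CES.

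\emph{Sufficiency.} Suppose first that some pair $i\neq j$ satisfies the two local conditions. Take $n_i=n_j=1$ and all other $n_l=0$. Then $M_{\mathbf n}$ is exactly the observable from the sufficiency part of Theorem~\ref{th1}, so its maximal eigenspace is spanned by an entangled vector, and Proposition~\ref{pro:obs} applies. Now suppose instead that $V^\perp$ is a CES, where $V=\mathrm{span}\{\ket{\psi_i}\}$. Take $\mathbf n=(-1,\ldots,-1)$, so that $M_{\mathbf n}=-\sum_i\dyad{\psi_i}\le 0$. Since $\sum_i\dyad{\psi_i}$ is positive definite on $V$ and vanishes on $V^\perp$, we get $\lambda_{\max}=0$ with eigenspace exactly $V^\perp$. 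That eigenspace is a CES, so we are done. One must assume $V^\perp\neq\{0\}$ here, which holds because a CES has dimension at most $(d_A-1)(d_B-1)$ by Eq.~\eqref{eq:CESmax}; I would add a remark that the empty-complement case is excluded.

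\emph{Necessity.} Assume every pair violates the local condition, so for each pair one local inner product has modulus $0$ or $1$. Assume also that $V^\perp$ contains a product vector. Fix an arbitrary $\mathbf n$; I will show that $E_{\max}(M_{\mathbf n})$ always contains a product vector. I split according to the sign of $\lambda_{\max}$.
\begin{itemize}
\item If $\lambda_{\max}=0$, then $E_{\max}=\ker M_{\mathbf n}\supseteq V^\perp$, which contains a product vector by hypothesis.
\item If $\lambda_{\max}<0$, then $\ker M_{\mathbf n}=\{0\}$. This forces $V$ to be the whole space and every $n_i<0$. In that case $V^\perp=\{0\}$, which is excluded as above.
\item If $\lambda_{\max}>0$, then $E_{\max}\subseteq V$, and this case needs a structural argument, given next.
\end{itemize}

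For the case $\lambda_{\max}>0$, form the graph on $[k]$ whose edges join non-orthogonal pairs. States in different connected components are orthogonal, so $M_{\mathbf n}$ is block diagonal with respect to the mutually orthogonal spans $V_C$ of the components $C$. Hence $E_{\max}$ is a direct sum of eigenspaces lying inside individual blocks, and it suffices to find a product vector in some top eigenvector of one block. Within a component, every edge is a non-orthogonal pair. For such a pair the local moduli are nonzero, so one of them must equal $1$; that is, the two states share a local factor on $A$ or on $B$. The goal is then to show that each $V_C$ is a CSS. If so, every vector of $V_C$, in particular every top eigenvector of that block, is a product vector.

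\emph{Main obstacle.} The key step is proving that $V_C$ is a CSS. Along a chain such as $a_1b_1 - a_1b_2 - a_2b_2$, the end pair $(1,3)$ must itself violate the condition. It is non-orthogonal, because $\braket{a_1}{a_2}\neq0$ and $\braket{b_1}{b_2}\neq0$ from the two edges, so it must share a factor; this forces $a_1\propto a_2$ or $b_1\propto b_2$, and the chain collapses. I would formalize this as a lemma: if $i\sim j$ share the $A$-factor and $j\sim l$ share the $B$-factor, then $i$ and $l$ are non-orthogonal and hence share a factor, which pins all three to a common factor. Induction on path length then gives a common $A$-factor or a common $B$-factor across the whole component, so $V_C=\ket{a}\otimes\mathcal T_B$ or $V_C=\mathcal T_A\otimes\ket{b}$. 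The delicate point is handling mixed orthogonal and non-orthogonal entries while doing this induction. If it fails, the fallback is to show directly that each block of $M_{\mathbf n}$ is unitarily of the form $\dyad{a}\otimes(\cdot)$ or $(\cdot)\otimes\dyad{b}$, whose eigenvectors are manifestly product vectors.
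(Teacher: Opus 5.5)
Your proposal is correct and follows essentially the same route as the paper: Theorem~\ref{th1} for a good pair, $\mathbf n=(-1,\ldots,-1)$ so that $E_{\max}=V^\perp$ when the complement is a CES, and for necessity the splitting of $[k]$ into non-orthogonality classes whose spans are CSSs, via the same three-state lemma. That lemma in fact contradicts linear independence outright and only ever walks along non-orthogonal pairs, so no delicate handling of orthogonal pairs is needed. It yields product vectors in every nonzero-eigenvalue eigenspace, while $\ker M\supseteq V^\perp$ handles the zero eigenvalue.

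The only loose end, which the paper shares, is the degenerate case $V^\perp=\{0\}$. The bound $(d_A-1)(d_B-1)$ does not give $V^\perp\neq\{0\}$. Since $\{0\}$ must not count as a CES for sufficiency to hold, necessity must also cover a full product basis with no good pair, and you dismiss that case rather than prove it. Your block decomposition settles it immediately, because then every eigenspace of $M$ is a direct sum of subspaces of the CSS blocks.
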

\begin{proof}
 We first prove sufficiency.
 If there exist \(i \neq j\) such that
 \(0 < |\braket{\psi_{A_i}}{\psi_{A_j}}| < 1\) and
 \(0 < |\braket{\psi_{B_i}}{\psi_{B_j}}| < 1\), then
 Theorem \ref{th1} shows that the pair
 \(\{\ket{\psi_i},\ket{\psi_j}\}\) already detects entanglement, and therefore
 the full set does as well. This proves sufficiency in the first case.

 It remains to prove sufficiency in the second case, namely when no pair satisfies
 the first alternative and
 \[
 K:=\mathrm{span}\{\ket{\psi_i}: i\in[k]\}^{\perp}
 \]
 is a CES. Then, for every \(i \neq j\), if \(\braket{\psi_i}{\psi_j}\neq 0\),
 Theorem \ref{th1} implies that either
 \(\ket{\psi_{A_i}} \propto \ket{\psi_{A_j}}\) or
 \(\ket{\psi_{B_i}} \propto \ket{\psi_{B_j}}\).

 Define an equivalence relation on \([k]\) by declaring \(i \sim j\) if either
 \(i=j\) or there exists a finite sequence
 \[
 i=i_0,i_1,\ldots,i_r=j
 \]
 such that
 \[
 \braket{\psi_{i_t}}{\psi_{i_{t+1}}}\neq 0,\qquad t=0,\ldots,r-1.
 \]
 Let \(J_1,\ldots,J_l\) be the corresponding equivalence classes. By construction,
 if \(r\neq s\) and \(i\in J_r\), \(j\in J_s\), then
 \(\braket{\psi_i}{\psi_j}=0\); otherwise \(i\) and \(j\) would belong to the same
 class.

 We claim that within each class \(J_r\), all product states share a common local
 factor on the same subsystem. Indeed, suppose there exist three states in the same
 class such that \(\ket{\psi_i}\) and \(\ket{\psi_j}\) share an \(A\)-side factor,
 while \(\ket{\psi_j}\) and \(\ket{\psi_k}\) share a \(B\)-side factor. Then we may write
 \[
 \ket{\psi_i}=\ket{a}\otimes\ket{b_i},\qquad
 \ket{\psi_j}=\ket{a}\otimes\ket{b_j},\qquad
 \ket{\psi_k}=\ket{a_k}\otimes\ket{b_j},
 \]
 where \(\braket{b_i}{b_j}\neq 0\) and \(\braket{a}{a_k}\neq 0\) because both adjacent
 pairs are nonorthogonal. Consequently,
 \[
 \braket{\psi_i}{\psi_k}
 =
 \braket{a}{a_k}\braket{b_i}{b_j}\neq 0.
 \]
 Hence \(\ket{\psi_i}\) and \(\ket{\psi_k}\) must also share a local factor. If they
 shared the \(A\)-side factor, then \(\ket{a_k}\propto\ket{a}\), so
 \(\ket{\psi_k}\propto\ket{\psi_j}\), contradicting linear independence. If they
 shared the \(B\)-side factor, then \(\ket{b_i}\propto\ket{b_j}\), so
 \(\ket{\psi_i}\propto\ket{\psi_j}\), again a contradiction. Therefore the type of
 shared local factor cannot change along a nonorthogonal chain. It follows that each
 class \(J_r\) is of exactly one of the two forms
 \[
 \ket{\psi_{j_m}}=\ket{a_r}\otimes\ket{b_{j_m}}, \qquad j_m\in J_r,
 \]
 or
 \[
 \ket{\psi_{j_m}}=\ket{a_{j_m}}\otimes\ket{b_r}, \qquad j_m\in J_r.
 \]

 For an arbitrary Hermitian linear combination
 \begin{equation}
 \label{eq:M}
 M =\sum_{i=1}^{k} n_i \dyad{\psi_i} = \sum_{r=1}^{l} G_r,
 \end{equation}
 define \(G_r\) by
 \begin{equation}
 \label{local:a}
 G_r = \sum_{j_m\in J_r} n_{j_m} \, |a_{r}\rangle \langle a_{r}| \otimes |b_{j_m}\rangle \langle b_{j_m}|,
 \end{equation}
 in the first case, or by
 \begin{equation}
 G_r = \sum_{j_m\in J_r} n_{j_m} \, |a_{j_m}\rangle \langle a_{j_m}| \otimes |b_{r} \rangle \langle b_{r} |
 \end{equation}
 in the second case. In either case, the range of \(G_r\) is contained in a CSS, namely
 \begin{equation}
 \label{eq:CSS}
 \mathrm{range}(G_r)= \ket{a_r} \otimes \mathcal{T}_{B,r}
 \quad \text{or} \quad
 \mathrm{range}(G_r) = \mathcal{T}_{A,r}\otimes \ket{b_r}.
 \end{equation}
 Since different classes are pairwise orthogonal, we have
 \begin{equation}
 \label{eq:oth}
 G_r G_s = 0,\qquad r\neq s.
 \end{equation}
 Therefore the support spaces of the \(G_r\) are mutually orthogonal, and a spectral
 decomposition of each \(G_r\),
 \begin{equation} \label{eq:Gj_spectral}
 G_r = \sum_{\alpha} \lambda_{r,\alpha} \, |\psi_{r,\alpha} \rangle \langle \psi_{r,\alpha} |,
 \end{equation}
 combines into a spectral decomposition of \(M\):
 \begin{equation}\label{eq:M_decompose}
 M = \sum_{r=1}^l \sum_{\alpha} \lambda_{r,\alpha} \, |\psi_{r,\alpha}\rangle \langle \psi_{r,\alpha}|.
 \end{equation}
 Every eigenvector of \(G_r\) corresponding to a nonzero eigenvalue lies in
 \(\mathrm{range}(G_r)\), hence is a product state by Eq.~\eqref{eq:CSS}. It follows
 that any eigenspace of \(M\) corresponding to a nonzero eigenvalue is spanned by
 product states and therefore cannot be a CES.

 Then \(K\subseteq \ker M\) for every choice of the coefficients \(n_i\).
 Since \(K\) is a CES, choose \(n_i<0\) for all \(i\). Then \(M\) is negative semidefinite,
 so its maximal eigenvalue is \(0\). Moreover, \(M\) is strictly negative on
 \(\mathrm{span}\{\ket{\psi_i}\}\): indeed, for any nonzero
 \(\ket{\phi}\in \mathrm{span}\{\ket{\psi_i}\}\),
 \[
 \bra{\phi} M \ket{\phi}
 = \sum_{i=1}^k n_i |\braket{\psi_i}{\phi}|^2 <0,
 \]
 because \(\braket{\psi_i}{\phi}=0\) for all \(i\) would force
 \(\ket{\phi}\) to be orthogonal to \(\mathrm{span}\{\ket{\psi_i}\}\) while also
 lying in it, hence \(\ket{\phi}=0\). Therefore \(\ker M=K\), so the maximal
 eigenspace of \(M\) is exactly \(K\), which is a CES. By Proposition \ref{pro:obs},
 the set \(\{\ket{\psi_i}\}_{i=1}^k\) can be used to detect entanglement. This proves
 sufficiency.

 We now prove necessity by contraposition. Assume that no pair satisfies the first
 alternative and that \(K\) is not a CES. Then \(K\) contains a product state.
 For an arbitrary Hermitian linear combination \(M\), the argument above still shows
 that every eigenspace corresponding to a nonzero eigenvalue is spanned by product
 states and therefore cannot be a CES. If \(\lambda_{\max}(M)=0\), then the maximal
 eigenspace is \(\ker M\), which contains \(K\), and therefore also contains a
 product state. Thus the maximal eigenspace of \(M\) is never a CES. By Proposition
 \ref{pro:obs}, no effective witness can be obtained from \(\{\ket{\psi_i}\}_{i=1}^k\),
 so the set cannot be used to detect entanglement. This proves necessity.
 \end{proof}

\begin{Rmk}
Theorem \ref{th2} shows that a set of \(k\) product-state projectors can be effective
only collectively: no proper subset is capable of detecting entanglement, while the full set is effective. A typical example is provided by a UPB.
The projectors onto the UPB vectors form a family of mutually orthogonal projectors, and any proper
subset fails to produce an effective EW because the eigenspaces corresponding to their nonzero
eigenvalues always contain product states.
However, when all projectors are used together, the eigenspace associated with
the zero eigenvalue becomes a CES. This implies that if the multiple
fidelities of a target state with respect to a UPB lie at the origin,
the state is supported on the UPB-complement subspace and is therefore entangled.
The normalized projector onto this complement is the standard PPT entangled state associated with the UPB \cite{bennett1999}.

\end{Rmk}

Unlike the JNR (or JSNR) of general observables\cite{wu2020},
the geometric structure of the JNR (or JSNR) for pure states may depend only
on a small number of parameters. One important reason is that both the JNR and the JSNR
are invariant under local unitary transformations.
\begin{proposition}[Local unitary invariance of \( L_{\mathcal X}\)]
\label{prop:LU-invariance-LX}
Let \(\mathcal X\) be a set of quantum states on a bipartite Hilbert space
$\mathcal{H}=\mathcal{H}_A\otimes\mathcal{H}_B$,
and let $A_1,\dots,A_k$ be Hermitian operators acting on $\mathcal{H}$.
Assume that \(\mathcal X\) is invariant under local unitary conjugations, namely
\begin{equation}
\rho \in \mathcal X
\ \Longrightarrow \
(U_A\otimes U_B)\,\rho\,(U_A\otimes U_B)^\dagger \in \mathcal X
\label{eq:X-LU-invariant}
\end{equation}
for all local unitary operators $U_A\otimes U_B$.
Then the set
\begin{equation}
L_{\mathcal X}(A_1,\dots,A_k)
=
\Bigl\{
\bigl(
\Tr(\rho A_1),\dots,\Tr(\rho A_k)
\bigr)\in\mathbb{R}^k
\;\big|\;
\rho\in \mathcal X
\Bigr\}
\label{eq:LX-def}
\end{equation}
is invariant under local unitary transformations of the observables, that is,
\begin{equation}
L_{\mathcal X}(U A_1 U^\dagger,\dots,U A_k U^\dagger)
=
L_{\mathcal X}(A_1,\dots,A_k),
\qquad U=U_A\otimes U_B.
\label{eq:LX-LU-invariance}
\end{equation}
In particular, Eq.~\eqref{eq:LX-LU-invariance} holds for the JNR,
where \(\mathcal X\) is the set of all quantum states, and for the JSNR,
where \(\mathcal X\) is the set of separable states.
\end{proposition}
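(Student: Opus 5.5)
The plan is to prove the set equality in Eq.~\eqref{eq:LX-LU-invariance} by double inclusion, using cyclicity of the trace and the local-unitary invariance \eqref{eq:X-LU-invariant} of \(\mathcal X\). Fix \(U=U_A\otimes U_B\) and note that \(U^\dagger=U_A^\dagger\otimes U_B^\dagger\) is again a local unitary, so \eqref{eq:X-LU-invariant} applies to conjugation by \(U^\dagger\) as well as by \(U\).

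For the inclusion ``\(\subseteq\)'', take a point of \(L_{\mathcal X}(UA_1U^\dagger,\dots,UA_kU^\dagger)\). It has the form \(\bigl(\Tr(\rho\,UA_1U^\dagger),\dots,\Tr(\rho\,UA_kU^\dagger)\bigr)\) for some \(\rho\in\mathcal X\). By cyclicity, \(\Tr(\rho\,UA_iU^\dagger)=\Tr(U^\dagger\rho U\,A_i)\) for each \(i\). Set \(\rho':=U^\dagger\rho U\). Then \(\rho'\in\mathcal X\) by \eqref{eq:X-LU-invariant} applied with \(U^\dagger\), so the tuple lies in \(L_{\mathcal X}(A_1,\dots,A_k)\).

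For the inclusion ``\(\supseteq\)'', start from \(\rho\in\mathcal X\) and put \(\rho'':=U\rho U^\dagger\in\mathcal X\). Then \(\Tr(\rho'' UA_iU^\dagger)=\Tr(\rho A_i)\), since \(U^\dagger U=I\). Hence every point of \(L_{\mathcal X}(A_1,\dots,A_k)\) is attained in \(L_{\mathcal X}(UA_1U^\dagger,\dots,UA_kU^\dagger)\).

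Finally, I would verify the hypothesis for the two cases named in the statement. The set of all density operators is invariant under any unitary conjugation. For the separable states, \(U\bigl(\sum_j p_j\sigma_j^A\otimes\sigma_j^B\bigr)U^\dagger=\sum_j p_j (U_A\sigma_j^AU_A^\dagger)\otimes(U_B\sigma_j^BU_B^\dagger)\), which is again separable.

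There is no real obstacle here. The only point that needs care is using invariance under \(U^\dagger\), and not only under \(U\), in the first inclusion; this is guaranteed because \eqref{eq:X-LU-invariant} holds for all local unitaries.
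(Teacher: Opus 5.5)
Your proposal is correct and follows essentially the same argument as the paper: trace cyclicity plus local-unitary invariance of \(\mathcal X\). The paper compresses your two inclusions into the single remark that \(\rho\mapsto U^\dagger\rho U\) is a bijection of \(\mathcal X\) onto itself. Your explicit check that the set of all states and the set of separable states satisfy the invariance hypothesis is a small addition that the paper leaves implicit.
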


\begin{proof}
By definition, we have
\begin{equation}
L_{\mathcal X}(U A_1 U^\dagger,\dots,U A_k U^\dagger)
=
\Bigl\{
\bigl(
\Tr(\rho\, U A_1 U^\dagger),\dots,\Tr(\rho\, U A_k U^\dagger)
\bigr)
\;\big|\;
\rho\in \mathcal X
\Bigr\}.
\label{eq:LX-UAU-def}
\end{equation}
Using the cyclicity of the trace, for each $i=1,\dots,k$,
\begin{equation}
\Tr(\rho\, U A_i U^\dagger)
=
\Tr(U^\dagger \rho U\, A_i).
\label{eq:trace-cyclicity}
\end{equation}
Since \(\mathcal X\) is invariant under local unitary conjugations as stated in
Eq.~\eqref{eq:X-LU-invariant}, the map
\begin{equation}
\rho \mapsto U^\dagger \rho U
\label{eq:rho-conjugation}
\end{equation}
is a bijection from \(\mathcal X\) onto itself. Therefore, as \(\rho\) ranges over \(\mathcal X\), so does \(U^\dagger \rho U\).
Combining Eqs.~\eqref{eq:LX-UAU-def}--\eqref{eq:rho-conjugation}, we obtain
\begin{equation}
L_{\mathcal X}(U A_1 U^\dagger,\dots,U A_k U^\dagger)
=
L_{\mathcal X}(A_1,\dots,A_k),
\label{eq:LX-equality-final}
\end{equation}
which coincides with Eq.~\eqref{eq:LX-LU-invariance}.
This completes the proof.
\end{proof}
Similarly, if \(\mathcal X\) in Proposition \ref{prop:LU-invariance-LX} represents all quantum states, then \(L_{\mathcal X}\) is (globally) unitary invariant.
In the following we analyze the geometric properties of the JSNR.
To simplify the analysis, we restrict ourselves to the case of two operators,
both of which are density operators of pure states.

\begin{proposition}
 \label{pro:JNR-two-pure}
 Let \(|\psi_1\rangle\) and \(|\psi_2\rangle\) be two pure states in a finite-dimensional
 Hilbert space \(\mathcal H\) with \(\dim \mathcal H \ge 2\), and let
 \(P_1 = |\psi_1\rangle\langle \psi_1|\) and \(P_2 = |\psi_2\rangle\langle \psi_2|\).
 Let \(x_1\) and \(x_2\) denote the two coordinates of the JNR or, in the product-state
 setting of (ii) below, the JSNR. For \(\gamma \in [0,1]\), define
 \[
 \mathcal{E}_\gamma
 :=
 \left\{ (x_1, x_2) \in [0, 1]^2 \mid (x_1 + x_2 - (1 - \gamma))^2 \le 4\gamma x_1 x_2 \right\}.
 \]
\begin{enumerate}
 \item
 Let \(c = |\langle \psi_1 | \psi_2 \rangle|^2\) and
 \(V = \mathrm{span}\{|\psi_1\rangle,|\psi_2\rangle\}\).
 The JNR of \(P_1\) and \(P_2\) is given by
 \begin{equation}
 \label{eq:conic}
 L(P_1, P_2)
 =
 \begin{cases}
 \mathcal{E}_c, & V^\perp = \{0\},\\[1mm]
 \mathrm{conv}\bigl(\mathcal{E}_c \cup \{(0,0)\}\bigr), & V^\perp \neq \{0\}.
 \end{cases}
 \end{equation}
 \item
 Suppose, in addition, that \(\mathcal H = \mathcal{H}_A \otimes \mathcal{H}_B\) is
 bipartite with \(\dim\mathcal H_A\ge2\) and \(\dim\mathcal H_B\ge2\), and that
 \(|\psi_1\rangle = |a_1\rangle \otimes |b_1\rangle\),
 \(|\psi_2\rangle = |a_2\rangle \otimes |b_2\rangle\) are product states, with local
 fidelities \(c_A = |\langle a_1 | a_2 \rangle|^2\) and
 \(c_B = |\langle b_1 | b_2 \rangle|^2\). Then the JSNR is given by
 \begin{equation}
 \label{eq:separable-JNR}
 L_{\mathrm{SEP}} (P_1, P_2)
 =
 \mathrm{conv} \Bigl(
 \Bigl\{
 \bigl(x_{1,A}x_{1,B},x_{2,A}x_{2,B}\bigr)
 \,\Big|\,
 (x_{1,A},x_{2,A})\in \mathcal{E}_{c_A}, \,
 (x_{1,B}, x_{2,B}) \in \mathcal{E}_{c_B}
 \Bigr\}
 \Bigr).
 \end{equation}
\end{enumerate}
\end{proposition}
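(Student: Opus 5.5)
The plan is to reduce everything to the two-dimensional subspace spanned by the reference vectors. There the fidelity map becomes an affine image of the Bloch ball. Part (ii) then follows by factorizing over product states and taking convex hulls.

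\emph{Part (i).} I will first note that \(\Tr(\rho P_i)=\Tr(\Pi\rho\Pi\,P_i)\), where \(\Pi\) is the projector onto \(V\). So the fidelity pair depends only on the compression \(\Pi\rho\Pi=t\,\tau\), where \(t=\Tr(\Pi\rho)\in[0,1]\) and \(\tau\) is a state on \(V\).

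Assume first \(0<c<1\), so \(\dim V=2\). I choose an orthonormal basis of \(V\) with \(|\psi_1\rangle=|e_1\rangle\) and \(|\psi_2\rangle=\sqrt c\,|e_1\rangle+\sqrt{1-c}\,|e_2\rangle\); the phase of \(|\psi_2\rangle\) is irrelevant. Writing \(\tau=\tfrac12(I+\mathbf r\cdot\boldsymbol\sigma)\) with \(|\mathbf r|\le1\), we get \(x_i=\tfrac12(1+\mathbf r\cdot\mathbf m_i)\), where \(\mathbf m_i\) is the Bloch vector of \(P_i\) and \(\mathbf m_1\cdot\mathbf m_2=2c-1\).

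The projection of the unit ball onto \((\mathbf r\cdot\mathbf m_1,\mathbf r\cdot\mathbf m_2)=(u,v)\) is the filled ellipse \(w^{T}G^{-1}w\le1\), with Gram matrix \(G=\bigl(\begin{smallmatrix}1&2c-1\\2c-1&1\end{smallmatrix}\bigr)\). Explicitly this is \(u^2+v^2-2(2c-1)uv\le4c(1-c)\). I will check by direct substitution \(u=2x_1-1\), \(v=2x_2-1\) that this is exactly \(\mathcal E_c\). The degenerate cases \(c=0\) and \(c=1\) give a segment and a point, which I will verify separately against the defining inequality.

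Hence states supported on \(V\) give exactly \(\mathcal E_c\), and general states give \(\bigcup_{t}t\,\mathcal E_c\) over the admissible \(t\). If \(V^\perp=\{0\}\), then \(t=1\) always. Otherwise every \(t\in[0,1]\) is realized by mixing with a state on \(V^\perp\). In that case \(\bigcup_{t\in[0,1]}t\mathcal E_c=\mathrm{conv}(\mathcal E_c\cup\{(0,0)\})\), because any convex combination of \((0,0)\) and points of the convex set \(\mathcal E_c\) is of the form \(t\,p\) with \(p\in\mathcal E_c\).

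\emph{Part (ii).} Since \(L_{\mathrm{SEP}}\) is the convex hull of the values on pure product states, I compute those values. For \(|\alpha\rangle\otimes|\beta\rangle\) one has \(x_i=|\langle a_i|\alpha\rangle|^2|\langle b_i|\beta\rangle|^2\), the coordinatewise product of a point of \(L_A:=L(|a_1\rangle\langle a_1|,|a_2\rangle\langle a_2|)\) and a point of \(L_B\). By part (i), each of \(L_A,L_B\) is either \(\mathcal E_{c_A}\) (resp. \(\mathcal E_{c_B}\)) or its convex hull with the origin.

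For the inclusion "\(\supseteq\)", any pair \(p_A\in\mathcal E_{c_A}\), \(p_B\in\mathcal E_{c_B}\) is realized by \(\tau_A\otimes\tau_B\) with \(\tau_A,\tau_B\) supported on the local spans, since the trace factorizes. This state is separable, so the right-hand side is contained in \(L_{\mathrm{SEP}}\).

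For "\(\subseteq\)", I use that the coordinatewise product is linear in each factor separately. A pure-state value of the form \((t p_A)\odot(s p_B)=ts\,(p_A\odot p_B)\) is then a convex combination of \(p_A\odot p_B\) and \((0,0)\). It remains to show \((0,0)\) lies in the right-hand side. The points \((1-c_A,0)\in\mathcal E_{c_A}\) and \((0,1-c_B)\in\mathcal E_{c_B}\) satisfy the defining inequality with equality, and their product is \((0,0)\).

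The main obstacle I expect is bookkeeping rather than depth. I need to verify the ellipse identity exactly, including the edge cases \(c\in\{0,1\}\) and linearly dependent local vectors. I also need to check that the origin is always in the hull even when \(c_A\) or \(c_B\) equals \(1\), where \((1-c_A,0)=(0,0)\) trivially.
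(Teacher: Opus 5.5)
Your plan follows the paper's route. You compress to $V$, identify the image of states on $V$ with $\mathcal{E}_c$, and rescale by $t=\mathrm{Tr}(\Pi\rho)$ to account for $V^\perp$. In (ii) you factor over pure product states, use bilinearity of the coordinatewise product, and place $(0,0)$ in the product set via $(1-c_A,0)\odot(0,1-c_B)$.

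The one local difference is how you obtain the ellipse. You project the Bloch ball and read off $w^{T}G^{-1}w\le 1$ from the Gram matrix. After $u=2x_1-1$, $v=2x_2-1$ this is exactly the defining inequality, and the box constraint is automatic for $0<c<1$. The paper instead parametrizes pure states by $(\theta,\varphi)$, takes the boundary at $\cos\varphi=\pm1$, and squares. Your version is cleaner: it handles mixed states on $V$ directly and needs no separate argument that the interior is filled.

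One claim, however, is false as written. At $c=1$ the span $V$ is one-dimensional, so the only state supported on $V$ gives the point $(1,1)$. But $\mathcal{E}_1=\{(t,t): t\in[0,1]\}$ is the whole diagonal. So ``states supported on $V$ give exactly $\mathcal{E}_c$'' fails, and your planned separate check of the point against the defining inequality would not match. The conclusion of (i) survives: when $c=1$, $\dim\mathcal{H}\ge 2$ forces $V^\perp\neq\{0\}$, and $\bigcup_{t\in[0,1]} t\,(1,1)=\mathcal{E}_1=\mathrm{conv}(\mathcal{E}_1\cup\{(0,0)\})$.

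The same slip reappears in your inclusion ``$\supseteq$'' of (ii). If $c_A=1$, a point $(t,t)\in\mathcal{E}_{c_A}$ with $t<1$ is not realized by any $\tau_A$ supported on $V_A$. Replace that step as follows: every $p_A\in\mathcal{E}_{c_A}\subseteq L_A$ is realized by some state $\tau_A$ on all of $\mathcal{H}_A$, by part (i) using $\dim\mathcal{H}_A\ge 2$, and $\tau_A\otimes\tau_B$ is still separable. This is exactly why the paper treats $c=1$ (and $c_A=1$ or $c_B=1$) separately, for instance realizing $(t,t)$ by a pure $\ket{\alpha}$ with $|\braket{a_1}{\alpha}|^2=t$.
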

\begin{proof}
 For part (i), Eq.~\eqref{eq:conic} is obtained by reducing the problem to the
 two-dimensional subspace \(V=\mathrm{span}\{|\psi_1\rangle,|\psi_2\rangle\}\)
 and parameterizing the pure states in \(V\); the detailed derivation is given
 in Appendix \ref{app:jnr_derivation}.

 For part (ii), Eq.~\eqref{eq:separable-JNR} is obtained by applying part (i)
 to the two local pairs and then taking the convex hull of the pointwise-product
 image; the detailed derivation is given in Appendix \ref{app:jsnr_general_derivation}.
\end{proof}

\begin{Rmk}
 Proposition \ref{prop:LU-invariance-LX} and Proposition \ref{pro:JNR-two-pure} reveal that the JNR
 for two pure states is determined solely by their
 fidelity \(c\), while the JSNR of two product states is
 determined by their local fidelities \(c_A, c_B\).
 This means that any two pairs of pure states
 with the same global fidelity \(c\) yield identical JNRs, while any two pairs of
 product states with the same local fidelities \(c_A,c_B\) yield identical JSNRs,
 regardless of the specific states or the dimensions of the underlying Hilbert spaces.
This property significantly simplifies the analysis of
entanglement detection using multiple fidelities, as it reduces
the JNR to a single parameter \(c\) and the JSNR of two product states
to the pair of local parameters \(c_A,c_B\).
\end{Rmk}

\begin{example}
 \label{ex:2qubit}
 Consider a 2-qubit system with two product states:
 \(|\psi_1\rangle = |00\rangle\) and \(|\psi_2\rangle = |++\rangle\), where
 \(|+\rangle = \frac{1}{\sqrt{2}}(|0\rangle + |1\rangle)\). The local fidelities are
 \(c_A = |\langle 0 | + \rangle|^2 = \frac{1}{2}\) and \(c_B = |\langle 0 | + \rangle|^2 = \frac{1}{2}\).
 Using Proposition \ref{pro:JNR-two-pure}, we can determine the JNR
 and JSNR for these two product states. The curved part of the JNR boundary is
 \begin{equation}
 \label{eq:analytical1}
 (x_1 + x_2 - \tfrac{3}{4})^2 = x_1 x_2, \quad x_1, x_2 \in [0, 1].
 \end{equation}
 For the JSNR, the common local range is
 \begin{equation}
 \mathcal{E}_{c_A} = \mathcal{E}_{c_B}
 =
 \left\{ (u, v)\in[0,1]^2 \,\middle|\, \left(u-\tfrac{1}{2}\right)^2 + \left(v-\tfrac{1}{2}\right)^2 \le \tfrac{1}{4}\right\}.
 \end{equation}
 Hence
 \[
 L_{\mathrm{SEP}}(P_1,P_2)
 =
 \mathrm{conv}\bigl\{(u_Au_B,v_Av_B)\mid (u_A,v_A),(u_B,v_B)\in \mathcal{E}_{c_A}\bigr\}.
 \]
In particular, the full JSNR contains the axis intercepts \((0,\tfrac{1}{2})\) and
\((\tfrac{1}{2},0)\); for instance, \((0,\tfrac{1}{2})\) is realized by the local
points \((x_{1,A},x_{2,A})=(0,\tfrac{1}{2})\in\mathcal{E}_{c_A}\) and
\((x_{1,B},x_{2,B})=(\tfrac{1}{2},1)\in\mathcal{E}_{c_B}\), while
\((\tfrac{1}{2},0)\) is realized by \((x_{1,A},x_{2,A})=(\tfrac{1}{2},0)\) and
\((x_{1,B},x_{2,B})=(1,\tfrac{1}{2})\). By contrast, the JNR conic \eqref{eq:analytical1} reaches the
larger intercepts \((0,\tfrac{3}{4})\) and \((\tfrac{3}{4},0)\), which already shows
that the JSNR is a strict subset of the JNR. The full JSNR is obtained numerically
by sampling the pointwise-product set and taking its convex hull.
Figure \ref{fig:JSNR_2qubit} illustrates the JNR and JSNR for these two product states,
highlighting the differences in their geometric structures.
\end{example}

\begin{figure}[t!]
 \centering
 \includegraphics[width=0.6\textwidth]{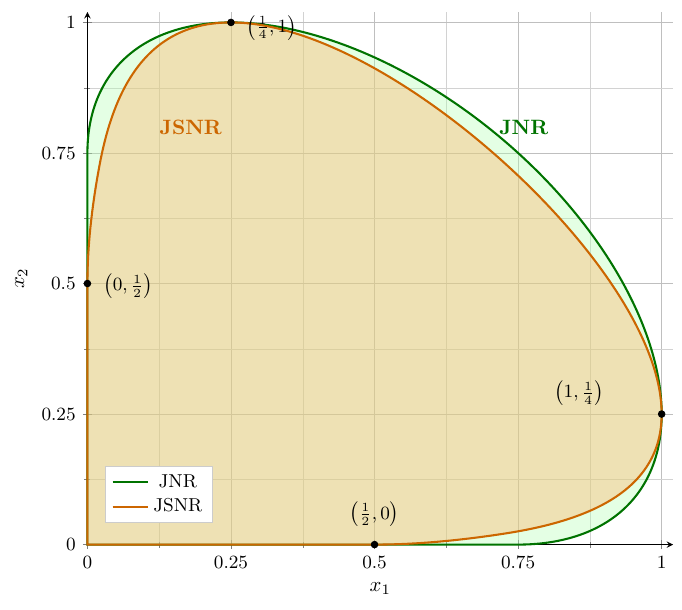}
 \caption{Comparison of the JNR and the JSNR for reference states $|\Psi_1\rangle = |00\rangle$ and $|\Psi_2\rangle = |++\rangle$, where $c = 0.25$ and $c_A = c_B = 0.5$. The JNR boundary is the conic \((x_1+x_2-\tfrac{3}{4})^2=x_1x_2\). The JSNR is obtained numerically as the convex hull of the pointwise-product set of the two local circular ranges. Although the JSNR is symmetric with respect to \(x_2 = x_1\), its boundary consists of piecewise segments rather than a single conic; in particular, it contains the intercepts \((0,\tfrac{1}{2})\) and \((\tfrac{1}{2},0)\).}
 \label{fig:JSNR_2qubit}
\end{figure}

\section{Conclusion}
\label{sec:con}
We have analyzed the entanglement detection power of multiple fidelities via the geometry
of the JSNR. By examining its supporting
hyperplanes, we derived necessary and sufficient conditions under which a set of \(k\)
product-state projectors forms an effective entanglement witness family.
Crucially, effectiveness is determined by whether at least one corresponding
linear combination has a maximal-eigenvalue eigenspace that is completely
entangled.

We further showed that if every pair of reference product states is either orthogonal or shares a local factor,
no linear combination can yield an effective witness unless the orthogonal complement of
the span of the whole set is a CES. UPBs provide a typical example of this mechanism:
if the reference states form a UPB, the full set is effective, while any proper subset is ineffective,
which differs from the case of just two product states.

Finally, we have shown that for two reference product states on a bipartite system
with arbitrary local dimensions, the geometry of their JSNR depends only on their local
fidelities. That is, for any two pairs of reference product states, if the
corresponding local fidelities are identical, then the resulting JSNRs have
the same geometry. In the representative two-qubit example, the full JSNR is
obtained by taking the convex hull of the pointwise-product set of the local
numerical ranges.

 Promising directions for future work include extending these methods to general
observables and multipartite systems, exploring connections with other entanglement
criteria, and optimizing measurement strategies for experiments.

\section*{Acknowledgments}

This work was supported by the National Natural Science Foundation of China under Grant Nos. 62072119.

\appendix
\section{Derivation of the JNR for two pure states}
\label{app:jnr_derivation}

In this appendix, we provide the formal derivation of the JNR for two pure state projectors $P_1 = |\psi_1\rangle\langle \psi_1|$ and $P_2 = |\psi_2\rangle\langle \psi_2|$. The JNR is defined as the set of expectation values attainable by all possible quantum states $\rho$ such that
\begin{equation}
L(P_1, P_2) = \{ (\mathrm{Tr}(\rho P_1), \mathrm{Tr}(\rho P_2)) \mid \rho \ge 0, \mathrm{Tr}(\rho) = 1 \}.
\end{equation}
By the properties of the numerical range, this set is convex and equivalent to the convex hull of the expectation values generated by pure states $|\phi\rangle$. The geometry of the set is uniquely determined by the fidelity $c = |\langle \psi_1 | \psi_2 \rangle|^2$.

 If \(c=1\), then \(P_1=P_2\) and
\[
L(P_1,P_2)=\{(t,t)\in[0,1]^2\}=\mathcal E_c,
\]
so part (i) of Proposition \ref{pro:JNR-two-pure} is immediate. We may therefore assume \(0\le c<1\) in the nontrivial derivation below.

 Let \(V=\mathrm{span}\{|\psi_1\rangle,|\psi_2\rangle\}\). Since both projectors are supported on \(V\), a general normalized pure state \(|\phi\rangle\in\mathcal H\) can be decomposed as
\begin{equation}
|\phi\rangle
=
\sqrt{t}\,|\phi_{\parallel}\rangle
+
\sqrt{1-t}\,|\phi_{\perp}\rangle,
\qquad t\in[0,1],
\end{equation}
where \(|\phi_{\parallel}\rangle\in V\), \(|\phi_{\perp}\rangle\perp V\), and both vectors are normalized whenever the corresponding coefficient is nonzero. Then
\[
\bigl(\langle \phi | P_1 | \phi \rangle,\langle \phi | P_2 | \phi \rangle\bigr)
=
t\bigl(\langle \phi_{\parallel} | P_1 | \phi_{\parallel} \rangle,\langle \phi_{\parallel} | P_2 | \phi_{\parallel} \rangle\bigr).
\]
Therefore it suffices to determine first the expectation-value set generated by normalized vectors inside \(V\).

To do so, after fixing the global phase of \(|\psi_2\rangle\) appropriately, we
 introduce an orthonormal basis $\{|e_1\rangle, |e_2\rangle\}$ of \(V\) such that
$|\psi_1\rangle = |e_1\rangle$ and $|\psi_2\rangle = \sqrt{c}|e_1\rangle +
\sqrt{1-c}|e_2\rangle$. Any normalized pure state $|\phi\rangle$ within this
subspace is parametrized by an angle $\theta \in [0, \pi/2]$ and a relative
phase $\varphi \in [0, 2\pi)$ as
\begin{equation}
|\phi\rangle = \cos\theta |e_1\rangle + \mathrm{e}^{\mathrm{i}\varphi} \sin\theta |e_2\rangle.
\end{equation}
The corresponding expectation values for the two projectors are given by the relations
\begin{align}
x_1 &= \langle \phi | P_1 | \phi \rangle = \cos^2\theta, \label{eq:x1_app} \\
x_2 &= \langle \phi | P_2 | \phi \rangle = c\cos^2\theta + (1-c)\sin^2\theta + 2\sqrt{c(1-c)}\cos\theta\sin\theta\cos\varphi. \label{eq:x2_app}
\end{align}
By substituting $x_1$ and $\sin^2\theta = 1-x_1$ into \eqref{eq:x2_app}, the expression for $x_2$ becomes
\begin{equation}
x_2 = cx_1 + (1-c)(1-x_1) + 2\sqrt{c(1-c)x_1(1-x_1)}\cos\varphi.
\end{equation}
The boundary of the set for this subspace is obtained at the extrema
 \(\cos\varphi = \pm 1\). Squaring the rearranged equation leads to the boundary equation
 \begin{equation}
(x_1 + x_2 - (1 - c))^2 = 4cx_1x_2, \label{eq:ellipse_app}
\end{equation}
which describes the boundary of the set \(\mathcal E_c\) generated inside \(V\).

 If \(V^\perp=\{0\}\), then every normalized pure state lies in \(V\), so the pure-state image is exactly \(\mathcal E_c\), and consequently
\[
L(P_1,P_2)=\mathcal E_c.
\]
If \(V^\perp\neq\{0\}\), then \((0,0)\) is attained by every normalized vector in \(V^\perp\), and every pure-state image point has the form \(t(y_1,y_2)\) with \((y_1,y_2)\in\mathcal E_c\) and \(t\in[0,1]\). Hence the pure-state image is contained in \(\mathrm{conv}(\mathcal E_c\cup\{(0,0)\})\). Conversely, \(\mathcal E_c\) and \((0,0)\) are both attained, so
\[
L(P_1,P_2)=\mathrm{conv}(\mathcal E_c\cup\{(0,0)\}).
\]

\section{Derivation of the JSNR for two product states}
\label{app:jsnr_general_derivation}

In this appendix, we derive Eq.~\eqref{eq:separable-JNR} for two product-state
projectors
\[
P_j = |a_j\rangle\langle a_j| \otimes |b_j\rangle\langle b_j|,
\qquad j=1,2.
\]
Since separable states are the convex hull of pure product states and the map
\(\rho \mapsto (\mathrm{Tr}(\rho P_1),\mathrm{Tr}(\rho P_2))\) is linear, it
suffices to characterize the image of pure product states.

For a pure product state \(|\alpha\rangle \otimes |\beta\rangle\), the expectation
values factorize as
\[
\bigl(\langle \alpha,\beta| P_1 |\alpha,\beta\rangle,\langle \alpha,\beta| P_2 |\alpha,\beta\rangle\bigr)
=
\bigl(u_{1,A}u_{1,B},u_{2,A}u_{2,B}\bigr),
\]
where
\[
u_{j,A} = |\langle a_j|\alpha\rangle|^2,
\qquad
u_{j,B} = |\langle b_j|\beta\rangle|^2,
\qquad j=1,2.
\]
Set \(u_A:= (u_{1,A},u_{2,A})\) and \(u_B:= (u_{1,B},u_{2,B})\). Applying
part (i) of Proposition \ref{pro:JNR-two-pure} to the two local pairs
\(|a_1\rangle,|a_2\rangle\) and \(|b_1\rangle,|b_2\rangle\), we obtain
\[
u_A \in \mathrm{conv}\bigl(\mathcal{E}_{c_A}\cup\{(0,0)\}\bigr),
\qquad
u_B \in \mathrm{conv}\bigl(\mathcal{E}_{c_B}\cup\{(0,0)\}\bigr),
\]
where the local fidelities are \(c_A = |\langle a_1|a_2\rangle|^2\) and
\(c_B = |\langle b_1|b_2\rangle|^2\).

Let
\[
S
=
\Bigl\{
\bigl(x_{1,A}x_{1,B},x_{2,A}x_{2,B}\bigr)
\,\Big|\,
(x_{1,A},x_{2,A}) \in \mathcal{E}_{c_A},\,
(x_{1,B},x_{2,B}) \in \mathcal{E}_{c_B}
\Bigr\}.
\]
The set \(S\) contains \((0,0)\): if \(c_A = 1\) or \(c_B = 1\), then
\((0,0)\) belongs to \(\mathcal{E}_{c_A}\) or \(\mathcal{E}_{c_B}\), respectively;
if \(c_A < 1\) and \(c_B < 1\), then
\((1-c_A,0) \in \mathcal{E}_{c_A}\) and \((0,1-c_B) \in \mathcal{E}_{c_B}\),
whose pointwise product is \((0,0)\).

Choose convex decompositions
\[
u_A = \sum_r \lambda_r x^{(r)},
\qquad
u_B = \sum_s \mu_s y^{(s)},
\]
with \(x^{(r)} \in \mathcal{E}_{c_A}\cup\{(0,0)\}\),
\(y^{(s)} \in \mathcal{E}_{c_B}\cup\{(0,0)\}\), \(\lambda_r \ge 0\),
\(\mu_s \ge 0\), and \(\sum_r \lambda_r = \sum_s \mu_s = 1\). Then
\[
\bigl(u_{1,A}u_{1,B},u_{2,A}u_{2,B}\bigr)
=
\sum_{r,s} \lambda_r \mu_s \bigl(x^{(r)}_1 y^{(s)}_1,x^{(r)}_2 y^{(s)}_2\bigr).
\]
If \(x^{(r)}=(0,0)\) or \(y^{(s)}=(0,0)\), the corresponding term equals
\((0,0)\in S\); otherwise it belongs to \(S\) by definition. Hence every pure-product
point lies in \(\mathrm{conv}(S)\).

Let \(V_A=\mathrm{span}\{|a_1\rangle,|a_2\rangle\}\) and
\(V_B=\mathrm{span}\{|b_1\rangle,|b_2\rangle\}\).

Conversely, if \(c_A<1\), then part (i) of Proposition \ref{pro:JNR-two-pure}
applied to the pair \(|a_1\rangle,|a_2\rangle\) shows that every point of
\(\mathcal{E}_{c_A}\) is attained by a normalized pure state in \(V_A\); if
\(c_A=1\), then \(\mathcal{E}_{c_A}=\{(t,t)\in[0,1]^2\}\), and each such point is
attained by a normalized pure state \(|\alpha\rangle\in\mathcal H_A\) with
\(|\langle a_1|\alpha\rangle|^2=t\). The same argument applies to
\(\mathcal{E}_{c_B}\). Therefore every point of \(S\) is attained by a pure
product state. It follows that the convex hull of the pure-product image is
exactly \(\mathrm{conv}(S)\), which proves Eq.~\eqref{eq:separable-JNR}.

\section*{References}
\bibliographystyle{iopart-num}
\bibliography{JPA}

\providecommand{\newblock}{}
\begin{thebibliography}{10}
\expandafter\ifx\csname url\endcsname\relax
  \def\url#1{{\tt #1}}\fi
\expandafter\ifx\csname urlprefix\endcsname\relax\def\urlprefix{URL }\fi
\providecommand{\eprint}[2][]{\url{#2}}
% Bibliography created with iopart-num v2.1
% /biblio/bibtex/contrib/iopart-num

\bibitem{horodecki2009}
Horodecki R, Horodecki P, Horodecki M and Horodecki K 2009 {\em Reviews of
  Modern Physics\/} {\bf 81} 865--942

\bibitem{nielsen2010quantum}
Nielsen M~A and Chuang I~L 2010 {\em Quantum computation and quantum
  information\/} (Cambridge University Press)

\bibitem{bennett1993}
Bennett C~H, Brassard G, Cr{\'e}peau C, Jozsa R, Peres A and Wootters W~K 1993
  {\em Physical Review Letters\/} {\bf 70} 1895--1899

\bibitem{bennett1992}
Bennett C~H and Wiesner S~J 1992 {\em Physical Review Letters\/} {\bf 69}
  2881--2884

\bibitem{bennett2014}
Bennett C~H and Brassard G 2014 {\em Theoretical Computer Science\/} {\bf 560}
  7--11

\bibitem{peres1996}
Peres A 1996 {\em Physical Review Letters\/} {\bf 77} 1413

\bibitem{chen2003}
Chen K and Wu L~A 2003 A matrix realignment method for recognizing entanglement
  (\textit{Preprint} \eprint{quant-ph/0205017})

\bibitem{rudolph2003}
Rudolph O 2003 {\em Physical Review A\/} {\bf 67} 032312

\bibitem{horodecki1997}
Horodecki P 1997 {\em Physics Letters A\/} {\bf 232} 333--339

\bibitem{chruscinski2014}
Chru{\'s}ci{\'n}ski D and Sarbicki G 2014 {\em Journal of Physics A:
  Mathematical and Theoretical\/} {\bf 47} 483001

\bibitem{guhne2009entanglement}
G{\"u}hne O and T{\'o}th G 2009 {\em Physics Reports\/} {\bf 474} 1--75

\bibitem{terhal2001family}
Terhal B~M 2001 {\em Linear Algebra and its Applications\/} {\bf 323} 61--73

\bibitem{lewenstein2000optimization}
Lewenstein M, Kraus B, Cirac J~I and Horodecki P 2000 {\em Physical Review A\/}
  {\bf 62} 052310

\bibitem{toth2005entanglement}
T{\'o}th G 2005 {\em Physical Review A—Atomic, Molecular, and Optical
  Physics\/} {\bf 71} 010301

\bibitem{piani2007class}
Piani M and Mora C~E 2007 {\em Physical Review A—Atomic, Molecular, and
  Optical Physics\/} {\bf 75} 012305

\bibitem{bourennane2004experimental}
Bourennane M, Eibl M, Kurtsiefer C, Gaertner S, Weinfurter H, G{\"u}hne O,
  Hyllus P, Bru{\ss} D, Lewenstein M and Sanpera A 2004 {\em Physical Review
  Letters\/} {\bf 92} 087902

\bibitem{weilenmann2020}
Weilenmann M, Dive B, Trillo D, Aguilar E~A and Navascu{\'e}s M 2020 {\em
  Physical Review Letters\/} {\bf 124} 200502

\bibitem{guhne2021}
G{\"u}hne O, Mao Y and Yu X~D 2021 {\em Physical Review Letters\/} {\bf 126}
  140503

\bibitem{zhang2025}
Zhang R and Wei Z 2025 {\em Quantum Science and Technology\/} {\bf 10} 015061

\bibitem{gutkin2013}
Gutkin E and {\.Z}yczkowski K 2013 {\em Linear Algebra and its Applications\/}
  {\bf 438} 2394--2404

\bibitem{bennett1999}
Bennett C~H, DiVincenzo D~P, Mor T, Shor P~W, Smolin J~A and Terhal B~M 1999
  {\em Physical Review Letters\/} {\bf 82} 5385

\bibitem{wu2020}
Wu P and Tang R 2020 {\em Journal of Physics A: Mathematical and Theoretical\/}
  {\bf 53} 445302

\bibitem{simnacher2021confident}
Simnacher T, Czartowski J {\em et~al.\/} 2021 {\em arXiv preprint
  arXiv:2107.04365\/}

\bibitem{bertsekas2003convex}
Bertsekas D, Nedic A and Ozdaglar A 2003 {\em Convex analysis and
  optimization\/} vol~1 (Athena Scientific)

\bibitem{parthasarathy2004}
Parthasarathy K~R 2004 On the maximal dimension of a completely entangled
  subspace for finite level quantum systems (\textit{Preprint}
  \eprint{quant-ph/0405077})

\bibitem{bengtsson2017}
Bengtsson I and {\.Z}yczkowski K 2017 {\em Geometry of quantum states: an
  introduction to quantum entanglement\/} (Cambridge University Press)

\end{thebibliography}

\end{document}